\pdfoutput=1
\documentclass[11pt]{article}

\usepackage[margin=1.1in]{geometry}
\usepackage{amsmath,amssymb,amsthm}
\usepackage{booktabs}
\usepackage{url}
\usepackage{lmodern}
\usepackage{microtype}
\usepackage[hidelinks]{hyperref}
\hypersetup{
  pdftitle={New lower bounds A(23,6,10) >= 2979 and A(24,6,10) >= 4214 for
    binary constant-weight codes},
  pdfauthor={Christian Lysenst{\o}en},
}

\newtheorem{theorem}{Theorem}
\newtheorem{proposition}[theorem]{Proposition}
\theoremstyle{remark}
\newtheorem{remark}[theorem]{Remark}

\newcommand{\A}{A}

\title{New lower bounds for binary constant-weight codes:\\
$A(23,6,10)\ge 2979$ and $A(24,6,10)\ge 4214$}

\author{Christian Lysenst{\o}en\\[4pt]
\small Inland Norway University of Applied Sciences\\[2pt]
\small\texttt{christian@lysenstoen.net}}

\date{July 2026}

\begin{document}
\maketitle

\begin{abstract}
Let $\A(n,d,w)$ denote the maximum size of a binary constant-weight code of
length $n$, minimum distance $d$, and weight $w$. We construct explicit codes
proving $\A(23,6,10)\ge 2979$ and $\A(24,6,10)\ge 4214$. These improve the
best surviving explicit codes of sizes 2969 and 4174 and surpass the
corresponding 1990 bounds 2970 and 4200 of Brouwer, Shearer, Sloane and Smith,
whose code listings were lost. We also obtain $\A(23,6,11)\ge 3539$ and
$\A(24,6,8)\ge 1855$. All four codes were independently verified by the
table's maintainer, whose local polishing subsequently improved them to 2981,
3550, 1857, and 4216, the values now on the live table. The constructions use
a coordinate decomposition in which one half is
fixed to a known code and the complementary half is selected from its full
cross-compatible pool using CHILS for maximum-weight independent set. For the
2969-word $\A(23,6,10)$ incumbent, exact computations with two solver families
prove insertion maximality and exclude every improving exchange deleting at
most three codewords. We also analyze codes invariant under prime-order
permutations: several cycle types are excluded exactly, the $5+1^{18}$ type
has upper bound 499, and reproducible heuristic saturation evidence is
reported for the remaining types, with $13+1^{10}$ left open. Code files, an
independent validator, model descriptions, and computational logs are
released.

\medskip
\noindent\textbf{Keywords:} constant-weight codes; lower bounds; lost codes;
maximum-weight independent set; computational verification.\\
\textbf{MSC 2020:} 94B65; 05B40; 90C27.
\end{abstract}

\section{Introduction}\label{sec:intro}

A binary \emph{constant-weight code} with parameters $(n,d,w)$ is a set of
binary vectors of length $n$, each of Hamming weight exactly $w$, pairwise at
Hamming distance at least $d$; $\A(n,d,w)$ denotes the maximum size of such a
code. For even $d$, distance is equivalent to a bounded pairwise intersection:
two weight-$w$ words $u,v$ satisfy $d(u,v)\ge d$ if and only if
$|u\cap v|\le w-d/2$, where words are identified with their supports.
Brouwer's online table~\cite{brouwer-table} is the community reference for
lower and upper bounds on $\A(n,d,w)$.

The 1990 paper of Brouwer, Shearer, Sloane and Smith~\cite{bsss1990}
introduced many bounds by heuristic search. Nine of its $d=6$ entries carry
the label ``y'' (no known structure) and exceed 1500 codewords, so their
explicit listings were omitted from the paper; the underlying machine outputs
did not survive. On the modern table these nine cells were printed as
\emph{lost codes}. A public-archive search (Wayback captures of Sloane's AT\&T
table, Shearer's pages, Edel's repository, and Brouwer's own tree) finds no
surviving explicit file for any of them.

Two facts frame the difficulty. First, the \emph{base rate}: verified against
the live table, none of the nine lost bounds had been re-attained in 35 years,
and Brouwer's page documents his own reconstruction effort (``we construct new
codes with [BSSS] parameters or better, with 9 exceptions'' --- these nine are
the exceptions). The table's maintainer already tried, and did not succeed, on
exactly this list. Second, the \emph{possibility of error}: a lost bound can
be an overcount --- $\A(18,6,6)$ was printed as 144 in the same 1990 table and
stands at 133 today --- so a hard search plateau below a lost value is itself
informative, and a lost value can only be settled by re-attaining it or by an
impossibility proof beyond current technology.

Against that backdrop, this paper reports:
\begin{enumerate}
\item explicit codes surpassing two of the nine lost bounds,
  $\A(23,6,10)\ge 2979$ and $\A(24,6,10)\ge 4214$, and improving two further
  cells over their best surviving codes, $\A(23,6,11)\ge 3539$ and
  $\A(24,6,8)\ge 1855$ (Section~\ref{sec:bounds}) --- all four verified by
  the table's maintainer, and subsequently further improved by his polishing
  (see the note closing Section~\ref{sec:data});
\item exact local-exchange computations and measured saturation evidence that
  help explain the difficulty of improving the surviving 2969-word incumbent:
  exchange dryness to deletion depth 3 (Section~\ref{sec:exchange}) and a
  cycle-type analysis of prescribed automorphisms of prime order $\ge 5$
  (Section~\ref{sec:symmetry});
\item a faithful reimplementation of the 1990 improver described in
  \cite[\S XII]{bsss1990}, augmented with exact plateau computations
  (Section~\ref{sec:bounds}); and
\item a quantification of the reproduction distance of the lost results
  (Section~\ref{sec:distance}).
\end{enumerate}
Section~\ref{sec:verification} states the verification discipline;
Section~\ref{sec:data} lists the released artifacts.

Throughout, every numeric claim is either an exact computational result
(reproduced by two independent solver families where stated), a
\textsc{measured} saturation (a heuristic search cap reproduced across
independent runs), or a \textsc{sourced} historical fact; the text labels
which.

\section{Preliminaries and verification discipline}\label{sec:verification}

Throughout, the \emph{incumbent} is the surviving code
\texttt{a23.6.10.2969H} of 2969 words. For weight-10 words in 23 coordinates,
two words are compatible if and only if $|u\cap v|\le 7$.

\paragraph{Two independent code-validity paths.}
Every code is checked by two computationally independent verifiers: one
computes pairwise Hamming distance by XOR-popcount, the other computes
pairwise intersection counts. A code counts only if both agree on validity,
size, and distinctness. For the four bounds reported here we additionally
provide a third, dependency-free verifier
(\path{anc/independent_verify.py}, pure Python standard library,
sharing no code with the construction pipeline) that recomputes minimum
distance two further ways (string position-difference and integer XOR) and
requires both to agree. Thus a referee can re-check any code file without
using the construction software. Code validity is a finite exact
computation.

\paragraph{Two independent solver engines.}
Every exchange or symmetry statement labeled a theorem below is computed as
\textsc{infeasible} (or \textsc{optimal}) by two independent solver families
--- CP-SAT~\cite{ortools} and one of CBC~\cite{cbc} or SCIP~\cite{scip} ---
encoding the identical model.

\paragraph{Self-test on ground truth.}
Before any exchange verdict is trusted, the identical pipeline must pass a
two-sided self-test on a cell with known optimum, $\A(18,6,6)=133$:
(i)~hold-one-out --- with one word removed from the optimum, the solver must
find the $+1$ witness; (ii)~maximality --- on the full 133-word code it must
prove \textsc{infeasible}. Both directions passed on every run.

\paragraph{Record discipline.}
No result was called a record until it passed an offline gate (two independent
verifiers plus distinctness) against a same-day re-pull of the live table,
and, for the word ``record'' in the community sense, until the table's
maintainer accepted the submission. The four bounds here passed the gate
against the table as re-pulled on 20 July 2026, were submitted to the
maintainer the same day, and were verified and added to the live table in July
2026; the corresponding code files are now also hosted in the table's own code
archive.

\section{Local optimality of the incumbent}
\label{sec:exchange}

Write the incumbent as $C_0$ (2969 words). A \emph{candidate} for insertion is
any weight-10 word not in $C_0$; its \emph{blockers} are the incumbent words
it conflicts with (intersection $>7$).

\begin{theorem}[insertion maximality]\label{thm:maximal}
No candidate word has zero blockers; $C_0$ is insertion-maximal.
\end{theorem}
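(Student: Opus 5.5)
This is a finite statement, so we prove it by exhaustive computation. The set of candidate words is every weight-10 subset of the 23 coordinates not in $C_0$. There are $\binom{23}{10}-2969 = 1144066-2969 = 1141097$ of them.

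For each candidate $x$ we must exhibit at least one blocker, that is, a word $c\in C_0$ with $|x\cap c|\ge 8$. Rather than testing all of the roughly $3.4\times 10^9$ candidate--codeword pairs directly, we organize the search from the codeword side. A word $x$ with $|x\cap c|\ge 8$ arises from $c$ in one of three ways:
\begin{itemize}
\item $x=c$ itself;
\item $x$ differs from $c$ by one swap, i.e.\ one element removed and one added, giving $10\cdot 13=130$ words;
\item $x$ differs from $c$ by two swaps, giving $\binom{10}{2}\binom{13}{2}=3510$ words.
\end{itemize}
So each codeword blocks a ball of $3641$ words. We enumerate these balls for all $2969$ codewords, about $1.08\times 10^7$ words in total, and mark each word in a bitmap. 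The bitmap is indexed by the combinatorial rank of the $10$-subset, so it has $\binom{23}{10}$ entries.

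The theorem is then equivalent to the claim that every one of the $1144066$ bitmap entries is marked. This covers both the codewords themselves and all candidates. Note that $2969\cdot 3641\approx 1.08\times 10^7$ comfortably exceeds $1.14\times 10^6$, so full coverage is plausible.

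In keeping with the verification discipline of Section~\ref{sec:verification}, we cross-check the result in two independent ways.
\begin{itemize}
\item \emph{Direct scan.} A second, independent computation iterates over all $1144066$ weight-$10$ words as $23$-bit masks. For each word it takes the maximum of $\mathrm{popcount}(x\wedge c)$ over $c\in C_0$ and records any word whose maximum is at most $7$. This costs about $3.4\times 10^9$ popcounts, which is routine.
\item \emph{Solver check.} Insertion maximality can also be posed as the feasibility of choosing $x$ with $x_1+\cdots+x_{23}=10$, $x\notin C_0$ (enforced by no-good cuts), and $\sum_{i\in c}x_i\le 7$ for all $c\in C_0$. We hand this model to CP-SAT and to CBC or SCIP, and require both to return \textsc{infeasible}.
\end{itemize}
Before trusting either check, we run the same pipeline on the $\A(18,6,6)=133$ self-test. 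This guards against indexing or threshold errors; in particular, the intersection threshold must be $\ge 8$ and not $>8$.

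The main obstacle is not mathematical but one of trustworthiness. The enumeration is trivial in cost, but an off-by-one in the intersection threshold, or an error in the subset-ranking bijection, would silently produce a false ``maximal'' verdict. For this reason the proof rests on agreement between the combinatorial-ball coverage count and the direct popcount scan. These two routes share no indexing code, and we additionally report the empty list of zero-blocker words along with the minimum blocker count over all candidates.
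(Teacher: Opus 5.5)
Your proposal is correct and is essentially the paper's proof. The paper runs an exhaustive blocker census over all $\binom{23}{10}-2969=1{,}141{,}097$ candidate words and reports that exactly $0$ of them have zero blockers, which is precisely your direct-scan check. Your codeword-side marking of the $3641$-word balls $\{x : |x\cap c|\ge 8\}$ and your CP-SAT formulation are additional, redundant implementations of the same finite check. In the CP-SAT model the no-good cuts excluding $C_0$ are unnecessary, since $x=c\in C_0$ already violates its own constraint $\sum_{i\in c}x_i\le 7$.
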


\begin{proof}[Computational verification]
Blocker census over all $\binom{23}{10}-2969=1{,}141{,}097$ candidate words
(the words of $C_0$ excluded); the zero-blocker count is exactly 0.
\end{proof}

\begin{theorem}[$C\le 2$ exchange dryness]\label{thm:c2}
There is no set $S$ of candidate words, each with at most two blockers,
pairwise compatible, with $|S| > \bigl|\bigcup_{s\in S}\mathrm{blockers}(s)\bigr|$;
that is, no positive exchange exists whose inserted words each conflict with
at most two incumbent words (blockers counted with respect to $C_0$),
regardless of the number of deletions.
\end{theorem}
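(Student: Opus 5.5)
The plan is to turn the statement into a single finite optimization problem and then certify, with two independent solver families, that it has no positive solution. The first step reuses the blocker census from the proof of Theorem~\ref{thm:maximal}. For every candidate word we record its blocker set with respect to $C_0$. By Theorem~\ref{thm:maximal}, no candidate has zero blockers, so we keep only the candidates with exactly one or two blockers. Call this pool $P$, and let $B=\bigcup_{s\in P}\mathrm{blockers}(s)\subseteq C_0$ be the incumbent words that ever appear as blockers.

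Over $P\cup B$ we then write the exchange model. Introduce binary variables $x_s$ for $s\in P$ (insert $s$) and $y_c$ for $c\in B$ (delete $c$). The constraints are:
\begin{itemize}
\item[(i)] $x_s\le y_c$ for every $c\in\mathrm{blockers}(s)$;
\item[(ii)] $x_s+x_t\le 1$ for every incompatible pair $s,t\in P$, i.e.\ $|s\cap t|>7$.
\end{itemize}
The objective is to maximize $\sum_s x_s-\sum_c y_c$. A set $S$ as in the statement, together with its blocker union, is a feasible point with objective $|S|-|\bigcup\mathrm{blockers}(s)|$. Conversely, any feasible point with positive objective yields such an $S$, because deleting non-required $y_c$ only increases the objective. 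So the theorem is equivalent to the optimum being $0$, or equivalently to infeasibility once the constraint $\sum x-\sum y\ge 1$ is added. No bound on the number of deletions enters, which matches ``regardless of the number of deletions.''

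Before solving, we decompose. The conflict structure splits into connected components of the bipartite graph on $P\cup B$ (blocker edges) joined with the compatibility conflicts inside $P$. The objective is additive over these components, so it suffices to show each component has optimum $0$. We also apply a cheap reduction. Candidates with a single blocker $c$ give a pure star: inserting any independent set of them around $c$ gains at most $\alpha-1$, where $\alpha$ is the largest compatible set of one-blocker candidates sharing $c$. These stars are cross-linked through the two-blocker candidates, however, so we keep the full joint model rather than arguing star by star.

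The hard part is scale. Constraint (ii) is quadratic in $|P|$, and the pool may contain tens of thousands of words. Enumerating incompatible pairs is done by bucketing candidates on shared $8$-subsets. Any two words with $|s\cap t|\ge 8$ share an $8$-subset, so each word lists its $\binom{10}{8}=45$ subsets and conflicts are found within buckets. To tighten the LP relaxation for CBC/SCIP, we would replace the pairwise conflicts by clique constraints, one per $8$-subset bucket, since all words in a bucket pairwise conflict. Each component is then solved to proven optimality $0$, or \textsc{infeasible} with the $\ge1$ cut, by CP-SAT and independently by CBC or SCIP on the identical model file. Before running this on $C_0$, the pipeline must pass the $\A(18,6,6)=133$ self-test from Section~\ref{sec:verification}.
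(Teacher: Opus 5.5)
Your proposal is correct and is essentially the paper's proof: the paper maximizes exactly this gain $|S|-\bigl|\bigcup_{s\in S}\mathrm{blockers}(s)\bigr|$ with CP-SAT over the candidates having at most two blockers, subject to the pairwise-conflict constraints, and obtains \textsc{optimal} with objective value 0. The actual instance is small (1248 candidates and 5261 conflict pairs, solved in 0.1\,s), so your component decomposition, $8$-subset bucketing, and clique strengthening are valid but unneeded refinements.
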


\begin{proof}[Computational verification]
CP-SAT maximizing the gain
$|S|-\bigl|\bigcup_{s\in S}\mathrm{blockers}(s)\bigr|$ over the 1248
candidates with at most two blockers, subject to 5261 pairwise-conflict
constraints; \textsc{optimal} with objective value 0, 0.1\,s.
\end{proof}

\begin{theorem}[depth-3 hole-refill dryness]\label{thm:depth3}
There is no pair $(D,S)$ with $D\subseteq C_0$, $|D|\le 3$, and $S$ a set of
weight-10 words, such that $(C_0\setminus D)\cup S$ is a valid $(23,6,10)$
code with $|S|>|D|$.
\end{theorem}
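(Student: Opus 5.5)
The plan is to reduce the statement to a finite family of small exact feasibility problems, one per deletion set $D$, and to decide each by two solver families as required in Section~\ref{sec:verification}. First, a reduction. If $(C_0\setminus D)\cup S$ is valid with $|S|>|D|$, we may assume $S\cap C_0\subseteq D$. Words of $D$ that are re-inserted can be dropped from both $D$ and $S$ without changing the gain, so we may take $S$ to consist of candidate words only. Every $s\in S$ must then satisfy $\mathrm{blockers}(s)\subseteq D$, since any blocker outside $D$ survives in $C_0\setminus D$. Hence $S$ is a set of pairwise compatible candidates whose blockers lie in $D$, with $|S|\ge|D|+1$. Removing words from $S$ keeps it valid, so it suffices to exclude $|S|=|D|+1$ exactly.

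Second, the cases by $|D|$. For $|D|=0$ this is Theorem~\ref{thm:maximal}. For $|D|\le 2$, every candidate in $S$ has at most two blockers, so Theorem~\ref{thm:c2} already excludes a positive exchange. The remaining case is $|D|=3$. Here $S$ may contain candidates with exactly three blockers, namely those whose blocker set is exactly $D$. If $S$ contains none of these, Theorem~\ref{thm:c2} again applies. So we need only consider triples $D$ with $D=\mathrm{blockers}(s)$ for some $3$-blocker candidate $s$, which is far fewer than $\binom{2969}{3}$ triples. We would enumerate these by a blocker census extending the one for Theorem~\ref{thm:maximal}. For each such $D$, the pool $P_D$ is the set of candidates with blockers contained in $D$, that is, the candidates with a single blocker in $D$, those with a pair of blockers inside $D$, and those with blocker set exactly $D$. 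On $P_D$ we solve the maximum independent set problem in the conflict graph (intersection $>7$) and ask whether it reaches $4$.

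Third, verification. Each instance is tiny, so both CP-SAT and CBC/SCIP can certify the optimum directly; we may even enumerate $4$-cliques of the compatibility graph by brute force. We would run the $A(18,6,6)=133$ self-test first, then report the maximum over all $D$ and confirm that it is at most $|D|$ in every instance.

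The main obstacle is completeness of the enumeration rather than solving. We must be certain that the census correctly assigns blocker sets to all $1{,}141{,}097$ candidates, and that no triple $D$ is missed. This includes triples reachable only through combinations of one- and two-blocker candidates whose union has size $3$. Those triples are already covered by Theorem~\ref{thm:c2}, but that argument should be double-checked. For this reason we would compute the census twice with the two independent intersection and XOR verifiers.
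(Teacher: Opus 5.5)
Your proposal is correct, but it takes a different route from the paper. The paper does not decompose by $D$, and it does not invoke Theorems~\ref{thm:maximal} or~\ref{thm:c2}. It writes one global CP-SAT feasibility model with:
\begin{itemize}
\item deletion variables $y_c$ for all incumbent words;
\item insertion variables $x_s$ for all 7751 candidates with at most 3 blockers;
\item linking constraints $x_s\le y_c$ and the budget $\sum_c y_c\le 3$;
\item lazily added pairwise-conflict cuts and the cardinality cut $\sum_s x_s\ge\sum_c y_c+1$.
\end{itemize}
It then certifies infeasibility with CP-SAT (4.1\,s) and independently with CBC (784\,s).

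You instead use Theorem~\ref{thm:c2}, which carries no deletion budget, to dispose of every configuration in which all inserted words have at most two blockers. There $|S|\ge|D|+1>\bigl|\bigcup_{s\in S}\mathrm{blockers}(s)\bigr|$ contradicts it outright, so the double-check you flag in your last paragraph is already settled. You are left only with triples $D=\mathrm{blockers}(s)$ for one of the $7751-1248=6503$ candidates $s$ with exactly three blockers. Each such pool $P_D$ is small enough that a brute-force search for four pairwise-compatible words suffices.

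Your route makes depth-3 dryness a corollary of Theorem~\ref{thm:c2} plus a transparent, essentially solver-free enumeration, and it shows why the budget-3 case is easy. The price is that the argument now rests entirely on Theorem~\ref{thm:c2}, whose proof as written reports only a CP-SAT run; the paper's model is an independent certificate. The paper's uniform model is also what plugs directly into the depth-$K$ plateau checks of the reconstructed improver. Your case split would need more casework at depth 4, where a minimal deletion set $D=\bigcup_{s\in S}\mathrm{blockers}(s)$ need not be the blocker set of a single word (for instance, a 3-blocker set plus one blocker of another inserted word). Since the subproblems stay tiny, though, it may be a viable line of attack on the open depth-4 case.

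One small correction: ``we may assume $S\cap C_0\subseteq D$'' is not a without-loss-of-generality step. With $D=\emptyset$ and $S=\{c\}$ for some $c\in C_0$, the literal statement fails. It must therefore be read with $S$ disjoint from $C_0\setminus D$, i.e.\ as a genuine net gain in code size. The paper's model adopts the same reading implicitly, by indexing insertions over candidates only. Your other reduction, dropping re-inserted words of $D$ from both $D$ and $S$, is a genuine reduction and is fine.
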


\begin{proof}[Computational verification]
This is the literal 1990 move of cutting a hole in the code and refilling it,
solved exactly at deletion depth 3. CP-SAT feasibility model with binary
variables $x_s$ (insert candidate $s$) and $y_c$ (delete incumbent word $c$):
insert variables restricted to the 7751 candidates with at most 3 blockers
--- sufficient, since any inserted word in a solution with $|D|\le 3$ has at
most 3 blockers --- linking constraints $x_s\le y_c$ for every blocker $c$ of
$s$, deletion budget $\sum_c y_c\le 3$, pairwise-conflict cuts
$x_s+x_{s'}\le 1$ ($43{,}105$ pre-seeded for shared-blocker pairs), and the
cardinality cut $\sum_s x_s\ge\sum_c y_c+1$; \textsc{infeasible} in 4.1\,s,
reproduced three times. Independently re-proven by CBC branch-and-cut on the
identical model, \textsc{infeasible} in 784\,s.
\end{proof}

\paragraph{Soundness of the relaxation.}
Lazily omitted pairwise-conflict constraints only enlarge the feasible region;
\textsc{infeasible} on the relaxation is therefore a valid proof for the true
problem.

\paragraph{Open problem.}
The depth-4 analogue is unresolved: the model (30{,}247 candidates,
${\approx}888$k pre-seeded pairs) was not closed within the budgets tried
(CP-SAT, CBC, and SCIP each left it open), so we state it as an open problem
rather than a theorem.

Theorems~\ref{thm:maximal}--\ref{thm:depth3} help explain the long plateau
at 2969 in the language of the heuristic that produced the lost value: the
standard local move, exhaustively searched to depth 3, opens nothing.

\section{Symmetry exclusions: evidence against low-order automorphisms}
\label{sec:symmetry}

A standard construction method in this regime --- prescribed automorphisms in
the Kramer--Mesner tradition~\cite{kramer-mesner,bjlo2019} --- fixes a group
$G\le S_{23}$ and searches unions of whole $G$-orbits. We investigate codes
invariant under permutations of prime order at least 5. Several cycle types
are excluded exactly, one admits a direct analytic bound, and the remaining
types are studied by reproducible heuristic saturation experiments, with the
$13+1^{10}$ type left open.

Since conjugation is a coordinate relabeling that preserves code size and
validity, the $G$-invariant search space depends on a cyclic subgroup only
through its cycle type. On 23 points an element of prime order $p$ has a cycle
type consisting of $p$-cycles and fixed points. We give a verdict for each
such type.

Fix a permutation $g$ and let $\langle g\rangle$ act on the weight-10 words.
Call an orbit \emph{self-compatible} if its members are pairwise compatible.
A $g$-invariant code is a union of orbits, and an orbit containing a
conflicting pair can never be contained in a code, so every $g$-invariant
code is a union of self-compatible orbits. The \emph{orbit-conflict graph}
has the self-compatible orbits as vertices, weighted by orbit size, with an
edge whenever some word of one orbit conflicts with some word of the other;
$g$-invariant codes of size $s$ correspond exactly to independent sets of
total weight $s$. An exact bound on the maximum independent-set weight
therefore bounds every $g$-invariant code.

\begin{theorem}[$11+1^{12}$ exclusion]\label{thm:t11}
No $(23,6,10)$ code of size $\ge 2970$ is invariant under a permutation of
cycle type $11+1^{12}$.
\end{theorem}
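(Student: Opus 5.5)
The proof reduces the statement to a finite exact computation on the orbit-conflict graph of Section~\ref{sec:symmetry}. By the conjugation remark, I fix the representative $g=(0\,1\,\cdots\,10)$, which fixes coordinates $11,\dots,22$. Since $11$ is prime, every orbit of $\langle g\rangle$ on weight-10 words has size $1$ or $11$. A word is $g$-fixed exactly when its support is a union of the 11-cycle and fixed points, which is impossible at weight 10 unless the support meets the cycle in either nothing or all 11 points. So the fixed words are the $\binom{12}{10}=66$ words supported on the fixed points. Every other word lies in an orbit of size 11. This gives an orbit count of $66+\bigl(\binom{23}{10}-66\bigr)/11$. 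That is on the order of $10^5$ orbits, which is small enough to enumerate exhaustively.

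The steps are as follows.
\begin{enumerate}
\item Enumerate the orbits by canonical representatives.
\item Discard the orbits that are not self-compatible, meaning two of their members meet in more than $7$ points. For an orbit $\{g^i u\}$ it suffices to test $|u\cap g^i u|\le 7$ for $i=1,\dots,5$.
\item Build the weighted orbit-conflict graph. Two orbits $O(u)$ and $O(v)$ are adjacent iff $|u\cap g^i v|>7$ for some $i\in\{0,\dots,10\}$. This is because conflicts between the orbits are $g$-invariant, so one representative can be held fixed.
\item Solve the maximum-weight independent set problem with weights $1$ or $11$ as an integer program in CP-SAT. Add the constraint that total weight is $\ge 2970$, and demand \textsc{infeasible}.
\end{enumerate}
Equivalently, one can prove that the optimum is at most $2969$. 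The argument then finishes by reproducing the verdict on the identical model with CBC or SCIP, per the discipline of Section~\ref{sec:verification}. As in the relaxation remark after Theorem~\ref{thm:depth3}, conflict constraints may be added lazily, because infeasibility of the relaxation is still a valid proof.

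To make the exact solve tractable, I would first reduce the model. Because all large orbits have weight 11, dividing the target by 11 shows the problem is to exclude
\[
11k+m\ge 2970 \quad\text{with } m\le 66,
\]
where $k$ is the number of large orbits chosen and $m$ the number of fixed words. Hence $k\ge 264$ is required. The fixed words form a $(12,6,10)$ subcode, so $m\le A(12,6,10)=A(12,6,2)=6$, and the requirement tightens to $k\ge 270$. I would add that cardinality cut directly. Beyond it, I would add clique cuts from dense local structure, such as orbits sharing a common 8-subset in a $g$-orbit sense, to tighten the LP bound.

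The main obstacle I expect is closing the branch-and-bound: the independent-set LP relaxation is weak, and $\sim 10^5$ vertices is large. As a first line, I would use an upper-bound decomposition. I would condition on the restriction of the code to the 12 fixed coordinates, or on the distribution of cycle-intersection sizes $|u\cap\{0,\dots,10\}|$. For each class I would use Johnson-type counting bounds, obtained by fixing an 8-set and noting that at most one codeword contains it. These give per-class upper bounds, and I would then solve the residual small subproblems exactly.
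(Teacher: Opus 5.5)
Your plan is the paper's proof. The paper likewise enumerates the $104{,}066$ orbits (your count is correct), keeps the self-compatible ones, and certifies \textsc{infeasible} for weight $\ge 2970$ on the weighted orbit-conflict graph with CP-SAT (19\,s) and independently SCIP (2\,s), cross-validating the orbit-graph construction by two implementations. Two corrections: $A(12,6,10)=A(12,6,2)=1$, not $6$, since any two weight-10 subsets of the 12 fixed points share at least 8 points (your cut $k\ge 270$ is unaffected); and the hard branch-and-bound you expect does not arise, because only $2640$ orbits are self-compatible --- the 66 fixed words plus the orbits whose cycle part is a cyclic $(11,5,2)$ difference set or its complement --- a structure so rigid that simple packing counts already bound every invariant code by $1+11(22+22+11+11)=727$.
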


\begin{proof}[Computational verification]
The 104{,}066 orbits reduce to 2640 self-compatible ones; a maximum-weight
independent set of weight $\ge 2970$ in the orbit-conflict graph is
\textsc{infeasible} --- CP-SAT, 19\,s; independently re-proven by SCIP
branch-and-cut in 2\,s; the orbit-graph construction is cross-validated by two
independent implementations.
\end{proof}

\begin{theorem}[$7+1^{16}$ exclusion]\label{thm:t7}
The same holds for cycle type $7+1^{16}$.
\end{theorem}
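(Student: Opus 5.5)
We follow the same orbit-conflict-graph reduction used for Theorem~\ref{thm:t11}. Fix $g=(1\,2\,\cdots\,7)$, fixing coordinates $8,\dots,23$. Any permutation of cycle type $7+1^{16}$ is conjugate to $g$, and conjugation is a coordinate relabeling that preserves size and validity. So it suffices to show that no $\langle g\rangle$-invariant $(23,6,10)$ code has size $\ge 2970$.

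\textbf{Orbit enumeration.} We enumerate the orbits of $\langle g\rangle$ on the $\binom{23}{10}$ weight-10 words. Since $7$ is prime, each orbit has size $1$ or $7$. Size $1$ occurs exactly when the support meets $\{1,\dots,7\}$ in $\emptyset$ or in all of it: $\binom{16}{10}+\binom{16}{3}=8008+560=8568$ fixed words. The number of orbits is therefore $8568+(\binom{23}{10}-8568)/7$. As a cross-check we will compare this count against a Burnside computation.

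\textbf{Self-compatibility and the conflict graph.} For each size-$7$ orbit we test whether it is self-compatible. Because the group is cyclic, this reduces to checking $|u\cap g^k u|\le 7$ for $k=1,2,3$ on a single representative $u$. For each pair of surviving orbits $O,O'$ we add an edge if $|u\cap g^k u'|>7$ for some $k\in\{0,\dots,6\}$, with $u,u'$ fixed representatives. Vertices are weighted by orbit size. By the correspondence stated before Theorem~\ref{thm:t11}, invariant codes are exactly weighted independent sets. To keep edge generation tractable, we index words by their fixed-part supports, since conflicts require intersection $\ge 8$. The construction is cross-validated by two independent implementations.

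\textbf{The decisive step.} We solve the feasibility problem ``weighted independent set of weight $\ge 2970$'' with CP-SAT and independently with SCIP (or CBC) on the identical model, requiring both to return \textsc{infeasible}. We expect this to be the main obstacle. With only $7$ points moved, the graph is far larger than in the $11+1^{12}$ case: there are roughly $10^5$ orbits, with many self-compatible singletons and orbits. The LP relaxation will be weak, and the instance is close to the unconstrained problem. To strengthen the model we would add clique cuts: all words containing a fixed $8$-set pairwise conflict, giving natural cliques indexed by $8$-subsets of coordinates, aggregated over orbits. We would also exploit the residual symmetry $S_{16}$ on the fixed points, together with the normalizer of $\langle g\rangle$ acting on the $7$-cycle, for symmetry breaking.

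If this direct approach stalls, the fallback is a decomposition. Split by the number of singleton orbits chosen, or by the restriction to the fixed coordinates, and bound each piece with the clique-cover LP. Each piece is then certified \textsc{infeasible} by both solver families, as in the soundness remark of Section~\ref{sec:exchange}.
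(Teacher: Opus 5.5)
This is essentially the paper's proof: the same orbit-conflict-graph reduction, with \textsc{infeasible} for weight $\ge 2970$ certified by CP-SAT and independently re-proven by SCIP. Your difficulty forecast is wrong, though, because no size-$7$ orbit is self-compatible: for $S=u\cap\{1,\dots,7\}$ with $1\le |S|=k\le 6$ one has $\sum_{j=1}^{6}|S\cap g^jS|=k(k-1)>6(k-3)$, so some $|u\cap g^ju|=10-k+|S\cap g^jS|\ge 8$; hence only the 8568 fixed words survive, the instance is small (about $3.0$M conflict pairs, closed by SCIP in $21$\,s) without your clique cuts, symmetry breaking or decomposition fallback, and in fact it can be bounded by hand as a $(16,6,10)$ code plus at most $5$ containers with pairwise disjoint triples, i.e.\ by $\binom{16}{8}/\binom{10}{8}+5=291$.
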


\begin{proof}[Computational verification]
The only self-compatible orbits are the 8568 fixed words, and
\textsc{infeasible} holds over the 3.0M-pair conflict graph (CP-SAT;
independently re-proven by SCIP in 21\,s).
\end{proof}

\begin{proposition}[inheritance]\label{prop:inherit}
If $G\le S_{23}$ has order divisible by 11, then by Cauchy's theorem $G$
contains an element of order 11, whose cycle type on 23 points is $11+11+1$
or $11+1^{12}$, and any $G$-invariant code is invariant under that element.
Likewise a transitive $G\le S_{23}$ has order divisible by 23
(orbit--stabilizer), hence contains a 23-cycle.
\end{proposition}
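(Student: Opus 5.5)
The plan is to establish the two claims of Proposition~\ref{prop:inherit} separately, each by a single standard group-theoretic fact, and then to read off the invariance statement from the definition of a $G$-invariant code.

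\emph{Order divisible by 11.} Suppose $11 \mid |G|$. Cauchy's theorem gives an element $g\in G$ of order exactly 11. Since 11 is prime, every cycle of $g$ in its disjoint-cycle decomposition has length dividing 11, so each cycle has length 1 or 11, and at least one cycle has length 11. The number $k$ of 11-cycles therefore satisfies $1\le k$ and $11k\le 23$, which leaves $k\in\{1,2\}$. These give the cycle types $11+1^{12}$ and $11+11+1$, respectively.

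\emph{Invariance.} A code $C$ is $G$-invariant when $\sigma(C)=C$ for every $\sigma\in G$. In particular this holds for $\sigma=g$, so $C$ is invariant under $g$ and hence under $\langle g\rangle$. Consequently, any exclusion proved for that cycle type, such as Theorem~\ref{thm:t11} for $11+1^{12}$, transfers to $G$.

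\emph{Transitive $G$.} If $G$ is transitive on the 23 coordinates, the orbit of a point $x$ is all 23 points. The orbit--stabilizer theorem then gives
\[
|G| = 23\cdot |G_x|,
\]
so $23 \mid |G|$. Applying Cauchy's theorem again yields an element of order 23. By the same divisibility argument as before, its cycles have length 1 or 23, and it is not the identity. Hence it must consist of a single 23-cycle.

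None of these steps is a genuine obstacle. The only point that needs care is the cycle-type enumeration, that is, checking that the cycles of an order-$p$ element all have length 1 or $p$ and that the resulting counting constraint admits only the listed possibilities.
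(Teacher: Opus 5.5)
Your proposal is correct and follows the paper's own argument. It uses Cauchy's theorem for the order-$11$ element, the fact that an element of prime order has only cycles of length $1$ or $p$ (so only $11+1^{12}$ or $11+11+1$ are possible, since $33>23$), restriction of invariance from $G$ to $\langle g\rangle$, and orbit--stabilizer plus Cauchy for the transitive case; you just spell out the cycle-type counting that the paper leaves implicit.
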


Type $11+1^{12}$ is excluded exactly by Theorem~\ref{thm:t11}. For type
$11+11+1$ the invariant space is conjugate to a fixed one whose best
saturation is ${\approx}2475$ (\textsc{measured}, dense-graph search), and
for the 23-cycle ${\approx}2369$ (\textsc{measured}) --- both well below
2970, though these are heuristic caps, not proofs. Together with
Theorem~\ref{thm:t11}, these measured caps provide evidence
against $M_{22}$ point stabilizers, $\mathrm{PSL}(2,11)$ acting $12+11$, all
Frobenius groups $11{:}k$, and transitive groups as sources of a code meeting
the target.

\paragraph{Prime-cycle-type sweep.}
For the remaining prime cycle types we report the orbit census and the best
search saturation (\textsc{measured}):

\begin{center}
\begin{tabular}{@{}llll@{}}
\toprule
cycle type & orbits & self-compatible & verdict (vs.\ target 2970) \\
\midrule
$23$ (transitive) & 49{,}742 & 49{,}170 & measured cap ${\approx}2369$ \\
$11+11+1$ & 104{,}006 & 103{,}059 & measured cap ${\approx}2475$ \\
$11+1^{12}$ & 104{,}066 & 2640 & \textbf{Theorem~\ref{thm:t11}: infeasible} \\
$7+1^{16}$ & 170{,}782 & 8568 & \textbf{Theorem~\ref{thm:t7}: infeasible} \\
$5+1^{18}$ & 270{,}674 & 52{,}326 & cap 499 (Remark~\ref{rem:fivecycle}) \\
$13+1^{10}$ & 88{,}006 & 26{,}720 & open, leaning infeasible$^{\dagger}$ \\
$17+1^{6}$, $19+1^{4}$ & 67k, 60k & 55k, 56k & measured caps ${\approx}1824$--$1836$ \\
multi-cycle 7- and 5-types & 163k--231k & 53k--226k & measured caps ${\approx}1001$--$1848$ \\
\bottomrule
\end{tabular}
\end{center}
\noindent {\footnotesize $^{\dagger}$greedy reaches 1457; a solve with only
part of the pairwise-conflict constraints enforced (a \emph{constraint
relaxation}) packs 2977 orbit words, but greedily deleting words until no
conflict remains (\emph{repair}) retains only ${\approx}871$.}

\begin{remark}\label{rem:fivecycle}
For type $5+1^{18}$ the census shows the self-compatible orbits are precisely
the $\binom{18}{10}+\binom{18}{5}=52{,}326$ fixed words. A fixed word's
support either avoids the 5-cycle (a weight-10 word on the 18 fixed points)
or contains it (the 5-cycle plus a weight-5 word on the fixed points). Two
avoiders are compatible exactly when their restrictions form an $(18,6,10)$
pair; two containers share the 5-cycle, so they are compatible exactly when
their weight-5 parts intersect in at most 2 points, i.e.\ form an $(18,6,5)$
pair; an avoider--container pair intersects in at most 5 points and is always
compatible. An invariant code is therefore precisely an $(18,6,10)$ code
together with an $(18,6,5)$ code, so its size is at most
$\A(18,6,10)+\A(18,6,5)=\A(18,6,8)+\A(18,6,5)\le 427+72=499$ (complementation;
upper bounds from the live table~\cite{brouwer-table}), far below 2970.
\end{remark}

Search saturations are reported with a measured calibration factor
(greedy-only probes underestimate plateau-search caps by ${\approx}1.25\times$,
obtained by running both on the known 23-cycle type --- a single-type
calibration, so corrected caps are indicative rather than certified); every
corrected cap remains below 2970, with the largest reported cap approximately
17\% below target. The multi-cycle 5- and 7-type margins are thinner than the
raw probes suggest, and the constraint-relaxation near-misses repair to well
below target across all types. Per-type orbit censuses, the exact counts behind the
aggregated rows, and all saturation logs are under \path{data/}
(Section~\ref{sec:data}).

\paragraph{Summary.}
With the single exception of cycle type $13+1^{10}$, every
prescribed-automorphism structure whose order is divisible by a prime $\ge 5$
is either excluded exactly by a theorem or analytic cap, or yields tightly
clustered search saturations below the target. The exact exclusions and
measured caps together suggest that a code of size
at least 2970 is unlikely to possess an automorphism of prime order $\ge 5$
outside the open $13$-type, consistent with the 1990 label ``no known
structure.'' (Groups
whose order is supported only on 2 and 3 are not swept; symmetry that weak
leaves an invariant space too large to bound with this method, and does not
constitute exploitable structure.) The $13+1^{10}$ type and the depth-4
exchange remain the two open items.

\section{The improved bounds: Johnson composition and the reconstructed 1990
improver}\label{sec:bounds}

\paragraph{Johnson decomposition.}
An $(n,d,w)$ code splits by its last coordinate into $A\times\{0\}$ (an
$(n-1,d,w)$ code) and $B\times\{1\}$ (an $(n-1,d,w-1)$ code), with a cross
constraint between $a\in A$ and $b\in B$ of $|a\cap b|\le w-d/2$: the new
coordinate adds 1 to each cross distance, and $d(a,b)=2w-1-2|a\cap b|$ is
odd, so $d\bigl((a,0),(b,1)\bigr)\ge d$ if and only if $|a\cap b|\le w-d/2$
--- one intersection unit weaker than the constraint
$|b\cap b'|\le(w-1)-d/2$ inside the weight-$(w-1)$ half. Fixing $A$ to the
best known $(n-1,d,w)$ code and searching $B$ in its full cross-compatible
pool (henceforth the \emph{cross-pool}) reduces each cell to a single
maximum-weight independent set problem on a dense conflict graph.

\paragraph{Search strength.}
With $A$ fixed to the surviving \texttt{a22.6.10.2183H} (2183 words), the
weight-9 cross-pool has 20{,}756 words; our own earlier searchers had
repeatedly capped its maximum compatible subset near 660, giving
total 2843, below the incumbent 2969. CHILS~\cite{chils}, a recent dense-graph MWIS heuristic, instead reaches 796 across a short ladder
($660 \to 768 \to 779 \to 784 \to 790 \to 796$) of independent seeded runs on
a laptop. The ``${\approx}660$ cap'' was an artifact of search strength, not
of structure.

\paragraph{The four improved bounds.}
All four codes pass both independent verifiers and the offline record gate of
Section~\ref{sec:verification}, and now appear on the live table.

\begin{itemize}
\item $\A(23,6,10)\ge \mathbf{2979}$ (lost 2970, surviving 2969).
  Construction: \texttt{a22.6.10.2183H} $\times\{0\}$ $\cup$ $B\times\{1\}$,
  where $B$ is a set of 796 mutually compatible weight-9 words found by CHILS
  in the 20{,}756-word cross-pool. $2183+796=2979$.
\item $\A(24,6,10)\ge \mathbf{4214}$ (lost 4200, surviving 4174).
  Construction: our $\A(23,6,10)$ code of 2978 words $\times\{0\}$ $\cup$
  $B\times\{1\}$, where $B$ is a set of 1236 weight-9 words found by CHILS in
  the corresponding 54{,}930-word cross-pool. $2978+1236=4214$. (This cascades
  the first result. The $A$-half is an earlier 2978-word code from the same
  pipeline --- not a subcode of the 2979-word code; the two share 2377 words
  --- recoverable from the released 4214-word file by restricting to words
  ending in 0; the composition was not re-run on the later 2979-word code.)
\item $\A(23,6,11)\ge \mathbf{3539}$ (improves the surviving 3535; the lost
  3585 is not reached). Construction: \texttt{a22.6.11.2636H} $\times\{0\}$
  $\cup$ $B\times\{1\}$, $B$ a set of 903 weight-10 words. $2636+903=3539$.
  Exact MWIS solves over unions of diverse CHILS solutions (\emph{solution
  merges} --- exact only on the merged subpool, hence \textsc{measured}
  evidence about the full cross-pool) saturate at 901--903 across rounds; the
  lost value would need $|B|=949$: 46 beyond our best.
\item $\A(24,6,8)\ge \mathbf{1855}$ (improves the surviving 1848; the lost
  1882 is not reached). Construction: \texttt{a23.6.8.1439H} $\times\{0\}$
  $\cup$ $B\times\{1\}$, $B$ a set of 416 weight-7 words. $1439+416=1855$. A
  solution merge over an earlier round (a union not containing the final
  416-word solution) has independence number 414 (\textsc{optimal} on that
  subpool); re-attaining the lost value would require $|B|=443$: 27 beyond
  our best.
\end{itemize}

The other five lost $d=6$ cells were not reached by this construction, with
saturation evidence (\textsc{measured}): $\A(24,6,12)$ caps at 5536 ($<$ the
surviving 5558) via a solution-merge optimum of 1997; $\A(25,6,8)$
and $\A(26,6,8)$ yield compositions below their surviving codes; and
$\A(27,6,8)$ and $\A(28,6,8)$ are out of reach of our searches a fortiori
(larger $B$-targets against tighter pools). The method's wins are exactly the
small-gap, pool-rich cells; its walls are the large-gap weight-8 chain and
the composition-capped weight-12 cell.

\paragraph{The reconstructed 1990 improver.}
The 1990 paper~\cite[\S XII]{bsss1990} describes the improver behind the lost
values: apply a random coordinate permutation; push codewords
lexicographically downward; delete the lexicographically last $k$ words ($k$
up to ${\approx}20\%$ of the code); refill. $k$-optimality was certified only
for $k=2$--$5$ --- so the lost values lived at the end of large-hole
trajectories, fully consistent with our depth-3 dryness theorem
(Theorem~\ref{thm:depth3}). We reimplement this improver faithfully, adding
one capability the 1990 authors lacked: at each plateau, the exact depth-$K$
exchange of Section~\ref{sec:exchange} runs on the current state, so a
plateau is either escaped by an exact witness or proved exchange-dry to
depth $K$ (no improving exchange deleting at most $K$ words) before
perturbation. On $\A(18,6,6)$, a 30-word lexicographic hole punched into the
known optimum 133 re-closes to exactly 133 in 32 cycles, chaining four exact
witnesses; on the $(23,6,10)$ cell itself, an interrupted run ratcheted
$2795\to 2807$ through chained witnesses.

\section{Reproduction distance and the difficulty ladder}\label{sec:distance}

How far was the lost 1990 result from what unaided greedy construction
reaches? We first fix the baseline. The \emph{lexicode} is the greedy code:
enumerate all weight-10 words in lexicographic order and accept each word
compatible with every word accepted so far.

\begin{proposition}\label{prop:lexicode}
The $(23,6,10)$ lexicode in the standard coordinate order has exactly 2377
words.
\end{proposition}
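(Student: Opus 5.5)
The statement is a finite, deterministic computation, so the plan is to run the greedy procedure exactly as defined and make the count trustworthy. First we fix ``standard coordinate order'' precisely. Words are identified with their supports in $\{1,\dots,23\}$, and lexicographic order is the order on weight-10 binary strings of length 23 read as integers. Under one natural convention this is colexicographic order on sorted supports. We will state the convention explicitly, because the lexicode genuinely depends on it, and check that the convention matches the one under which the count 2377 is claimed. Note that 2377 also appears in the paper as the overlap between the 2978- and 2979-word codes. That is a coincidence of numbers, not evidence, so the convention has to be pinned down independently.

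The computation itself enumerates all $\binom{23}{10}=1{,}144{,}066$ weight-10 words in the chosen order and maintains the accepted list $L$. Each new word $u$ is accepted iff $|u\cap v|\le 7$ for every $v\in L$, which is equivalent to $d(u,v)\ge 6$. Represent words as 23-bit integers and test $\mathrm{popcount}(u\wedge v)\le 7$. The naive cost is about $1.1\text{M}\times 2377\approx 2.7\cdot 10^9$ popcounts. That is feasible in compiled code.

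To cut the cost, or to provide an independent second path, we will index accepted words by their 8-subsets. A candidate $u$ conflicts with $L$ iff one of its $\binom{10}{8}=45$ 8-subsets is already covered by an accepted word. Each accepted word inserts its 45 8-subsets into a hash set. The algorithm is then a one-pass packing, and the two implementations (direct popcount and 8-subset covering) should agree.

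Next we certify the output. We re-run the two independent code-validity verifiers of Section~\ref{sec:verification} on the resulting list to confirm it is a valid $(23,6,10)$ code of 2377 distinct words. Validity alone does not establish the statement, however. The count also requires \emph{greedy correctness}: every rejected word must conflict with some earlier accepted word, and every accepted word must be compatible with all earlier accepted words. A single verification pass can check this. We sort $L$, and for each word $u$ in order we check that $u\in L$ iff $u$ has no blocker in $L$ that precedes it. This characterizes the lexicode uniquely, since by induction the greedy set is the unique set satisfying the rule.

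The main obstacle is not the computation but the unambiguous definition of the ordering. Lex versus colex, and bit 0 as coordinate 1 versus coordinate 23, give different codes and different sizes. The first thing we will do is fix the order in the released script and cross-check it by computing a known small lexicode value under the same convention, for example a small $(n,6,w)$ cell. Once the convention is fixed, the characterization check above makes the count 2377 a certified exact result.
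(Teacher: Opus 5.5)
Your plan is correct and matches the paper. The paper gives no argument beyond reporting 2377 as the output of exactly this deterministic greedy enumeration. Your predecessor-blocker certification pass is a sound extra check, since the lexicode is the unique set $L$ with $u\in L$ iff no earlier word of $L$ blocks $u$.

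One correction to your side remark: the bit-to-coordinate convention cannot change the size. Greedy in the order induced by any coordinate relabeling $\pi$ returns exactly $\pi^{-1}$ of the standard lexicode. So the only size-relevant ambiguity is increasing versus decreasing order, i.e.\ colex versus lex on supports; by complementation, the lex-on-supports version is equivalent to the integer-order $(23,6,13)$ lexicode.
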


The size is empirically robust to relabeling: eight independent random
coordinate permutations each yield exactly 2377 (\textsc{measured};
lexicographic greedy is not relabeling-equivariant in general, so we do not
claim invariance). Hence the 1990 pipeline's annealing-and-refill stages
accounted for a climb of ${\approx}593$ words above the greedy baseline.

The full ladder is: lexicode $2377$ $\to$ surviving incumbent $2969$ $\to$
lost value $2970$ $\to$ the live table's upper bound
$7521$~\cite{brouwer-table}. Reproducing a lost value is therefore a
${\approx}600$-word climb, not a ${\approx}30$-word polish, and an improver
campaign needs iteration counts commensurate with the months of runtime the
1990 search spent on period hardware. Modern semidefinite and linear
programming upper-bound work
(Schrijver-type~\cite{schrijver2005}; Kim--Toan~\cite{kim-toan};
Polak~\cite{polak2019}; cf.\ also~\cite{avz2000}) never touched this cell and
leaves roughly $2\times$ lower/upper gaps in the nearest $d=6$ cells, so upper
bounds posed no obstacle to 2970 or to our 2979.

Surpassing 2970 and 4200 shows that at least two of the nine lost
values were attainable (indeed undershot). The remaining question, for the
seven not surpassed here, is whether they are attainable, or whether some are
1990 overcounts --- the $\A(18,6,6)$ $144\to133$ precedent keeps this
possibility live. Both possibilities motivate either large-scale reproduction
runs or reclassification, cell by cell.

\section{Data and code availability}\label{sec:data}

Every statement above is machine-checked or measured; all scripts, logs, and
code files reproduce it from a clean clone of the repository
\url{https://github.com/Chrislysen/solon}.

\paragraph{Explicit codes.}
{\sloppy
The four codes accompany this paper as ancillary files (Brouwer format:
one codeword per line), and are also hosted in the reference table's code
archive as \texttt{a23.6.10.2979H}, \texttt{a24.6.10.4214H},
\texttt{a23.6.11.3539H}, and \texttt{a24.6.8.1855H} under
\url{https://aeb.win.tue.nl/codes/cwc/d6/} (verified live, July 2026).\par}

\paragraph{Verifiers.}
\path{anc/independent_verify.py} (standalone, dependency-free,
re-checks all four codes; included as an ancillary file); the engine's dual
verifiers; the offline record gate \path{scripts/cwc_gate_dump.py}.

\paragraph{Pipelines.}
{\sloppy
\path{scripts/cwc_johnson_chain.py} (composition), the CHILS integration,
\path{scripts/cwc_hole_refill.py} (exchange computations),
\path{scripts/cwc_orbit_group.py} (symmetry sweep), and
\path{scripts/cwc_coadapt.py} (reconstructed improver). Logs are under
\path{data/}.\par}

\paragraph{Environment.}
All timings are from a single consumer laptop (Intel Core Ultra~9 275HX,
16\,GB RAM, Windows~11). OR-Tools CP-SAT~9.15 was used; the CBC, SCIP, and
CHILS version identifiers, exact command lines, and per-run logs are recorded
under \path{data/}.

\paragraph{Status.}
The four bounds were submitted to the table's maintainer on 20 July 2026,
independently verified by him, and added to the live
table~\cite{brouwer-table} in July 2026 with the corresponding code files;
the table's lost-codes list for $d=6$ has dropped from nine entries to seven.

\paragraph{Note added (21 July 2026).}
After verifying the submitted codes and updating the table, A.~E.~Brouwer
applied his local-improvement (``polish'') program to them and improved all
four: $\A(23,6,10)\ge 2981$, $\A(23,6,11)\ge 3550$, $\A(24,6,8)\ge 1857$, and
$\A(24,6,10)\ge 4216$. The live table now lists these polished values, with
code files (\texttt{a23.6.10.2981H} etc.)\ that we have independently
re-verified; our original four codes remain available as the ancillary files
of this paper and in the table's archive under their original names. The
analysis in this paper concerns our original codes.

\subsection*{Acknowledgments}
\phantomsection\addcontentsline{toc}{subsection}{Acknowledgments}
The author thanks Andries E.\ Brouwer for maintaining the reference table,
for promptly and independently verifying the submitted codes and updating the
table, and for the polish improvements recorded in the closing note. Part of this work was conducted while the author was an exchange
student at the University of California, Berkeley.
Computational searches were conducted using the author's Solon research
software. The reported lower bounds are independently verifiable from the
released code files, and the exact optimization claims are accompanied by
complete model descriptions and reproducibility logs.

\phantomsection

\end{document}